\documentclass{eptcs}

\usepackage{amsmath,amssymb,amsthm}
\newtheorem{definition}{Definition}
\newtheorem{theorem}[definition]{Theorem}
\newtheorem{proposition}[definition]{Proposition}
\newtheorem{remark}[definition]{Remark}
\newtheorem{lemma}[definition]{Lemma}
\newtheorem{corollary}[definition]{Corollary}
\newtheorem{question}[definition]{Question}

\newcommand{\rrandnotiz}[1]{}
\newcommand{\randnotiz}[1]{}


\newcommand{\PC}{\textit{\textsf{PC}\/}}
\newcommand{\PRN}{\textit{\textsf{PRN}\/}}
\newcommand{\PPR}{\textit{\textsf{PPR}\/}}
\newcommand{\BRN}{\textit{\textsf{BRN}\/}}
\newcommand{\BPR}{\textit{\textsf{BPR}\/}}
\newcommand{\MBPR}{\textit{\textsf{MBPR}\/}}
\newcommand{\MPPR}{\textit{\textsf{MPPR}\/}}
\newcommand{\PR}{\textit{\textsf{PR}\/}}

\newcommand{\CC}{\textit{\textsf{C}}}

\newcommand{\pleft}[1]{\textsl{left}(#1)}
\newcommand{\pright}[1]{\textsl{right}(#1)}
\newcommand{\pnumber}[1]{\textsl{no}(#1)}
\newcommand{\pselect}[1]{\textsl{sel}(#1)}

\newcommand{\aph}{\textsf{\textup{APH}}}
\newcommand{\seq}[2]{#1 \Rightarrow #2}
\newcommand{\aphp}{\aph^+}
\newcommand{\rr}[2]{#1\ \triangleright\ #2}
\newcommand{\We}{\textsf{W}}
\newcommand{\WW}{\mathbb{W}}

\newcommand{\MM}{\mathcal{M}}
\newcommand{\MMle}{\MM(\lambda\eta)}

\newcommand{\sz}[1]{\textsf{s}_{\textsf{0}}\,#1}
\newcommand{\so}[1]{\textsf{s}_{\textsf{1}}\,#1}
\newcommand{\sll}[1]{\textsf{s}_{\ell}\,#1}
\newcommand{\pll}[1]{\textsf{p}_{\ell}\,#1}
\newcommand{\pw}[1]{\textsf{p}_{\We}\,#1}
\newcommand{\cw}{\textsf{c}_\We}
\newcommand{\csubs}{\textsf{c}_{\subseteq}}
\newcommand{\cmono}{\textsf{c}_{\succeq}}
\newcommand{\lw}{\textsf{l}_\We}
\newcommand{\ka}{\textsf{k}}
\newcommand{\ess}{\textsf{s}}

\newcommand{\sbwiw}{(\Sigma_\We^\textbf{b}\text{-}\textsf{I}_\We)}
\newcommand{\sbwmpi}{(\Sigma_\We^\textbf{b}\text{-}\textsf{MPI})}

\newcommand{\II}{\mathcal{I}}
\newcommand{\BB}{\mathcal{B}}
\newcommand{\Bb}{\textsf{B}}

\newcommand{\izo}{\ensuremath{i \in \{0,1\}}}

\newcommand{\FPtime}{\textsc{\textup{FPtime}}}
\newcommand{\FPspace}{\textsc{\textup{FPspace}}}
\newcommand{\FPH}{\textsf{\textup{FPH}}}

\title{An applicative theory for $\FPH$}
\author{Reinhard Kahle
\institute{CENTRIA and DM, FCT, Universidade Nova de Lisboa, P-2829-516
  Caparica, Portugal}
\email{kahle@mat.uc.pt}
\and
Isabel Oitavem
\institute{CMAF, Universidade de Lisboa and DM, FCT, Universidade Nova de
Lisboa, P-2829-516 Caparica, Portugal}
\email{oitavem@fct.unl.pt}}

\begin{document}

\maketitle
{\renewcommand{\thefootnote}{}\footnotetext{Work
    partially supported by the ESF research project \emph{Dialogical
      Foundations of Semantics} within the ESF Eurocores program
    \emph{LogICCC}, LogICCC/0001/2007 (funded by the Portuguese
    Science Foundation, FCT). The second author was also supported by
    the project \emph{Functional interpretations of arithmetic and
      analysis}, PTDC/MAT/104716/2008 from FCT.}}

\begin{abstract}
In this paper we introduce an applicative theory which characterizes
the polynomial hierarchy of time.
\end{abstract} 

\section{Introduction}\randnotiz{First paragraph not very informative IV.1}

In this paper we define an applicative theory whose provably total
functions are those which belong to the polynomial hierarchy of time. 

Considering theories which characterize classes of computational
complexity, there are three different approaches: in one, the
functions which can be defined within the theory are ``automatically''
within a certain complexity class. In such an account, the syntax has
to be restricted to guarantee that one stays in the appropriate class.
This results, in general, in the problem that certain definitions of
functions do not work any longer, even if the function is in the
complexity class under consideration. In a second account, the
underlying logic is restricted.\footnote{As an example for this
  approach we may cite \cite{Schw06}.} In the third account, one does
not restrict the syntax, allowing, in general, to write down
``function terms'' for arbitrary (partial recursive) functions, nor
the logic, but only for those function terms which belong to the
complexity class under consideration, one can \emph{prove} that they
have a certain characteristic property, usually, the property that
they are ``provably total'' (see Definition \ref{provtotal} below).
While the function terms, according to the underlying syntactical
framework, may have a straightforward computational character, i.e.,
as $\lambda$ terms, the logic which is used to prove the
characteristic property may well be \emph{classical}.

Here, we follow the third account, using \emph{applicative theories}
as underlying framework. 

Applicative theories are the first-order part Feferman's system of
explicit mathematics \cite{Fef75,Fef79}. They provide a very handy
framework to formalize theories of different strength, including to
characterize classes of computational complexity. A first
characterization of polynomial time operations in applicative theories
was given by Strahm in \cite{Str97}. A uniform approach to varies
complexity classes, including \textsc{FPtime}, \textsc{FPspace},
\textsc{FPtime-FLinspace}, and \textsc{FLinspace} was given by the same
author in his Habilitationsschrift, published in \cite{Str03}. These
characterizations are based on bounded schemes in the vein of Cobham
\cite{Cob65} (see also \cite{Clo99}). Cantini \cite{Can02} gave, at
the same time, a characterization of \textsc{FPtime} in an applicative
framework following the approach of Bellantoni and Cook \cite{BC92}
which separates the input positions of functions in normal and safe. 

On the base of a characterization of the functions in the Polynomial
Hierarchy which uses a monotonicity condition, given in
\cite{BALO1x}, we present here an applicative theory for
\FPH. Given a function algebra, the main objective of defining
a corresponding theory is, of course, to introduce an adequate
induction scheme which allows to prove properties for the functions
under consideration. 
In section \ref{zwei} we rewrite the input-sorted characterization of
$\FPH$ given in \cite{BALO1x} as a non-sorted
characterization, in Cobham style, by introducing bounds in the
recursion schemes. The next sections are concerned with the main goal
of this paper: to define an induction scheme which takes care of the
monotonicity condition.  While the proof of the lower bound follows
from a (more or less) straightforward embedding of the function
algebra 
described in section \ref{zwei}, the upper bound is carried out by an
adaptation of the proof(s) given by Strahm in \cite{Str03}.

Note, that Strahm also treats the polynomial hierarchy in
\cite{Str03}, but in a quite different way which involves a special
type two functional. 

\paragraph{Notation.} We use $\WW$ to denote the word algebra
generated by $\epsilon$ (source), and $S_0$ and $S_1$
(successors). $\WW$ is usually interpreted over the set of binary
words $\{0,1\}^*$. Given $x,y \in \WW$, $|x|$ is the length of $x$ and
$x|_y$ denotes the word corresponding to the first $|y|$ bits of $x$. $x'$
denotes the numeric successor of $x$, and it defined according to the
equations $\epsilon' = S_0(\epsilon)$, $(S_0(x))' = S_1(x)$ and
$(S_1(x))' = S_0(x')$. The letters $x,y,z,w,\dots$ denote usually
variables, while $f,g,h,s,r,\dots$ denote function symbols. $\vec x$
and $\vec f$ denote, respectively, a sequence of variables and
functions of the appropriate arity.

\section{Function algebras for $\FPH$}\label{zwei}

In this section we work with two function algebras. One formulated in
a non-sorted context, and the other formulated in a two-input-sorted
context following notation introduced by Bellantoni and Cook in
\cite{BC92}. In the sorted context, function arguments have two sorts,
\emph{normal} and \emph{safe}. We write them by this order, separated
by a semicolon: $f(\vec x;\vec y)$.

$\textsc{PH}$, the polynomial hierarchy of time, is usually defined as
$\bigcup_i \Sigma_i$ or $\bigcup_i \Delta_i$ with $\Sigma_0 = \Delta_0
= \textsc{P}$ and, for $i \ge 0$, $\Sigma_{i+1} =
\textsc{NP}(\Sigma_i)$ and $\Delta_{i+1} =
\textsc{P}(\Sigma_i)$. The corresponding function classes are $\Box_i
= \textsc{FPtime}(\Delta_i) = \textsc{FPtime}(\Sigma_{i-1})$, for $i
\ge 1$, and $\FPH = \bigcup_i \Box_i =
\textsc{FPtime}(\textsc{PH})$. 

Consider the following partial order over $\WW$, using $\leq$ as the
natural one on $\{0,1\}$. 

\begin{definition}
For $w,v \in \WW$, we write $w \preceq v$ if $|w| < |v|$, or $|w| =
|v|$ and $\forall i. w_i \leq v_i$. We write $w \prec v$ if $w \preceq
v$ but $w \not= v$. 
\end{definition}

\begin{definition}
\begin{enumerate}
\item A function $h$ is called \emph{monotone} if, for all $z \in
  \WW$, $z\preceq h(\vec{x},z)$.
\item A two-sorted function $h$, with at least one safe argument, is
  called \emph{monotone} if, for all $z \in \WW$, $z\preceq
  h(\vec{x};\vec{y},z)$.
\end{enumerate}
\end{definition}

\begin{definition}
\begin{enumerate}
\item 
Given a function $h$, its \emph{monotone section} is the function 
$$h^m(\vec x, z) = 
\begin{cases}
h(\vec x,z) & \text{if $z \preceq h(\vec x,z)$},\\
z & \text{otherwise}.
  \end{cases}$$
\item Given a two-sorted function $h$, with at least one safe
  argument, its \emph{monotone section} is the function 
$$h^m(\vec x;\vec{y}, z) = 
\begin{cases}
h(\vec x;\vec{y},z) & \text{if $z \preceq h(\vec x;\vec{y},z)$},\\
z & \text{otherwise}.
  \end{cases}$$
\end{enumerate}
\end{definition}  

Clearly, monotone sections are always monotone functions.

\subsection{Predicative approach}

Consider the class $[\BB; \PC, \PRN, \PPR]$ of two-input-sorted
functions.

$\BB$ is the set of basic functions defined as follows:
\begin{enumerate}
\item 
$\epsilon$ (a zero-ary function);
\item 
$\pi_i^{k,n}(x_1,\dots,x_k;x_{k+1},\dots,x_{k+n}) = x_i$, for each $1
\le i \le k+n$;
\item 
$S_i(x;) = xi$, \izo;
\item 
$S_i(z;x) = \begin{cases}
xi & \text{if $|x|<|z|$},\\
x & \text{otherwise},
\end{cases} \izo;$
\item 
$P(;\epsilon) = \epsilon$, $P(;xi) = x$, \izo;
\item 
$p(;\epsilon) = \epsilon$, $p(;x') = x$;
\item 
$Q(;\epsilon,y,z_0,z_1) = y$, $Q(;xi,y,z_0,z_1) = z_i$, \izo;
\item 
  $\times(x,y;) = 1^{|x|\times|y|}$.
\end{enumerate}

$\PC$, $\PRN$ and $\PPR$ are the following operators:
\begin{itemize}
\item Predicative composition: Given $g, \vec r, \vec s$, their
  predicative composition $f = \PC(g,\vec r,\vec s)$ is defined by 
$$f(\vec x;\vec y) = g(\vec r(\vec x;);\vec s(\vec x; \vec y)).$$
\item Predicative recursion on notation: Given $g, h_0, h_1$, the
  predicative recursion on notation scheme defines a function
  $f=\PRN(g,h_0,h_1)$ by
\begin{align*}
f(\epsilon,\vec x;\vec y) & = g(\vec x;\vec y), \\
f(zi,\vec x;\vec y) & = h_i(z,\vec x;\vec y,f(z,\vec x; \vec z)), &\izo
\end{align*}
\item Predicative primitive recursion: Given $g$ and $h$, the
  predicative primitive recursion scheme defines a function $f =
  \PPR(g,h)$ by 
\begin{align*}
f(\epsilon,\vec x;\vec y) & = g(\vec x;\vec y), \\
f(z',\vec x;\vec y) & = h(z,\vec x;\vec y,f(z,\vec x; \vec z)).
\end{align*}
\end{itemize}

\begin{proposition}[\cite{BC92} and \cite{Oit97}]
\begin{itemize}
\item $[\BB; \PC, \PRN] = \FPtime$,
\item $[\BB;\PC, \PRN, \PPR] = \FPspace$.
\end{itemize}
\end{proposition}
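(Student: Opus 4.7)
The plan is to prove each equality by two inclusions. For $[\BB; \PC, \PRN] \subseteq \FPtime$ I would proceed by induction on the construction of $f$, carrying along the joint invariant that $f$ is computable in polynomial time and satisfies a length bound of the shape
\begin{equation*}
|f(\vec x; \vec y)| \le p_f(|\vec x|) + \max_j |y_j|
\end{equation*}
for a polynomial $p_f$ depending only on $f$. This additive bound is the crucial invariant because it prevents the lengths of safe arguments from entering the polynomial estimate. The basic functions in $\BB$ satisfy it directly, with $\times$ being the only one contributing a genuinely polynomial term, and only in its normal positions. Predicative composition $\PC(g,\vec r,\vec s)$ preserves the invariant since safe outputs of $\vec s$ are fed only into safe positions of $g$, so their lengths add rather than compose. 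For $\PRN(g,h_0,h_1)$, iterating the step bound for $h_i$ along the $|z|$ recursion steps introduces at most an additive polynomial in $|z|$, which is absorbed into $p_f(|\vec x|)$; the running time of the whole construction is then polynomial by a parallel induction.

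For the reverse inclusion $\FPtime \subseteq [\BB;\PC,\PRN]$ I would appeal to Cobham's characterization of $\FPtime$ by bounded recursion on notation over suitable initial functions, and then simulate each bounded recursion predicatively: the bounding term is computed on the normal side and its value is used to truncate the recursive call, which is placed in the safe position of the step function. The smash $\times$ in $\BB$ is exactly what is needed to generate polynomial bounds on the normal side. This is the standard Bellantoni--Cook simulation, which I would follow essentially verbatim from \cite{BC92}.

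For $[\BB; \PC, \PRN, \PPR] = \FPspace$ the proof is structurally parallel. In the upper bound one replaces the polytime invariant by a polyspace one while keeping the same additive length bound; the new case $\PPR(g,h)$ is handled by observing that although the recursion depth is now exponential in $|z|$ (being the numeric value of $z$), only the current recursive value has to be stored at each stage, so polynomial space suffices, and the per-step additive growth still accumulates to a polynomial overall length bound. The lower bound uses the analogous characterization of $\FPspace$ by bounded primitive recursion on the numeric successor (Oitavem \cite{Oit97}), translating each bounded use into $\PPR$ by the same safe/normal trick as for $\PRN$. The main obstacle in both equalities is setting up the additive length invariant so that it goes through uniformly under all operators; once that is in place, both the complexity upper bounds and the reductions to Cobham- and Oitavem-style bounded recursion become routine.
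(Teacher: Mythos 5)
The paper does not prove this proposition at all --- it imports it from \cite{BC92} and \cite{Oit97} --- so the only question is whether your sketch would actually go through. The $\FPtime$ half does: the additive invariant $|f(\vec x;\vec y)| \le p_f(|\vec x|) + \max_j|y_j|$ is Bellantoni--Cook's original bounding lemma, it is preserved by $\PC$ and by $\PRN$ (depth $|z|$, so the accumulated additive growth is $|z|\cdot p_h$, still polynomial), and the Cobham simulation in the converse direction is standard.

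There is, however, a genuine gap in the $\FPspace$ half. You claim to keep ``the same additive length bound'' through $\PPR$ and assert that ``the per-step additive growth still accumulates to a polynomial overall length bound.'' It does not: $\PPR$ iterates the step function $\mathrm{val}(z)\approx 2^{|z|}$ times, so an additive contribution of $p_h(|z|,|\vec x|)$ per step accumulates to $2^{|z|}\cdot p_h(|z|,|\vec x|)$, which is exponential whenever $p_h$ is not identically zero. The additive bound is therefore \emph{not} an inductive invariant for $\PPR$, and since your polyspace claim (``only the current recursive value has to be stored'') presupposes that the current value has polynomial length, the space bound collapses with it. The correct invariant --- and the one the paper records in Remark~\ref{r7}, attributed to \cite{Oit97} --- is the max-form bound
\begin{equation*}
|f(\vec x;\vec y)| \le \max\{q_f(|\vec x|),\ \max_i|y_i|\},
\end{equation*}
which is idempotent under iteration: if $|h(z,\vec x;\vec y,w)|\le\max\{q_h(|z|,|\vec x|),\max_i|y_i|,|w|\}$, then the bound on $f(z,\vec x;\vec y)$ is independent of the recursion depth. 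Note that this stronger invariant is only available because the safe successor in $\BB$ is the \emph{bounded} one, $S_i(z;x)=xi$ only when $|x|<|z|$; with the unbounded additive bound one could otherwise manufacture step functions (safe-position concatenation against a normal budget) whose $\PPR$-iterates have exponential length. So you need to replace the additive invariant by the max-form one for the second equality (it also works, with trivial modifications, for the first), and only then does the rest of your argument --- iterative bottom-up evaluation in polynomial space, and Thompson-style bounded primitive recursion for the converse --- become routine.
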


\begin{definition}
Given $g$ and $h$, the \emph{predicative monotone primitive recursion
  scheme} $\MPPR$ is defined by $\MPPR(g,h) = \PPR(g,h^m)$. 
\end{definition}

\begin{proposition}[\cite{BALO1x}]\label{p6}
$[\BB;\PC, \PRN, \MPPR] = \FPH.$
\end{proposition}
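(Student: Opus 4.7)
The plan is to prove both inclusions, following the strategy of \cite{BALO1x}. Since the preceding proposition already gives $[\BB;\PC,\PRN] = \FPtime = \Box_1$, the role of $\MPPR$ must sit strictly between $\PRN$ (which yields only $\FPtime$) and full $\PPR$ (which yields $\FPspace$): the monotonicity requirement is precisely what keeps the exponentially-long primitive recursion within $\FPH$.

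For the lower bound $\FPH \subseteq [\BB;\PC,\PRN,\MPPR]$, I would proceed by induction on the level $i$ of $\Box_i = \FPtime(\Sigma_{i-1})$. The base case is immediate. For the inductive step, a single additional application of $\MPPR$ to suitable $\Box_i$-definable functions simulates an oracle call to $\Sigma_i$: the iteration of a monotone step function encodes a witness-search, where at each of the exponentially many iterations either the current candidate is strictly $\prec$-extended or it has already stabilised. Since $\preceq$-chains within a polynomial length bound have only polynomial length, the exponential iteration terminates with a correct witness, and the normal/safe sorting imposed by $\MPPR$ ensures that this construction remains predicative (the growing witness appears as the last safe argument of~$h$, while the recursion variable stays in normal position).

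For the upper bound $[\BB;\PC,\PRN,\MPPR] \subseteq \FPH$, closure of $\FPH$ under the basic functions, $\PC$ and $\PRN$ is standard. The critical case is $\MPPR$: given $g, h \in \FPH$, one must show $f = \PPR(g,h^m) \in \FPH$, despite the fact that computing $f(z, \vec x; \vec y)$ requires up to $2^{|z|}$ applications of $h^m$. The key is that monotonicity forces $f(0,\vec x;\vec y) \preceq f(1,\vec x;\vec y) \preceq \cdots \preceq f(z,\vec x;\vec y)$ to have only polynomially many \emph{strict} $\prec$-jumps: at each fixed length $\ell$ a strictly-increasing $\preceq$-chain can flip at most $\ell$ bits from $0$ to $1$, and the values are length-bounded by a fixed polynomial in the inputs. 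Hence $f(z,\vec x;\vec y)$ is determined by polynomially many jump points together with their values, both of which can be located by a $\FPH$-computation that asks alternating queries of the form ``starting from value $v$ at index $m$, does $h^m$ keep the value equal to $v$ for the next $k$ iterations?''.

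The main obstacle is precisely this $\MPPR$-closure step of the upper bound: turning the combinatorial observation about polynomial-length monotone chains into a genuine $\FPH$-procedure requires interleaving a binary search for jump indices with oracle queries at the appropriate $\Sigma_j$-level, while carefully tracking that no intermediate quantity blows up in length. Once this is set up, induction on the build-up of the function algebra closes the upper bound, and together with the level-by-level construction in the lower bound this establishes Proposition~\ref{p6}.
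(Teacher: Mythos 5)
This proposition is not proved in the paper at all: it is imported verbatim from \cite{BALO1x}, and the paper's contribution starts only after it (the bounded reformulation $[\II;\CC,\BRN,\MBPR]$ and the applicative theory $\aph$). So there is no in-paper proof to compare your argument against; what can be assessed is whether your sketch is a sound reconstruction of the cited result. The core insight you identify for the upper bound is indeed the right one: a $\preceq$-chain whose members have length bounded by a polynomial $p$ admits only $O(p^2)$ strict $\prec$-steps (the length can grow at most $p$ times, and at fixed length $\ell$ the number of $1$-bits can grow at most $\ell$ times), so the exponentially long $\MPPR$ recursion is determined by polynomially many jump points, locatable by binary search with queries one quantifier level up from the step function. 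That, together with the observation that each occurrence of $\MPPR$ in a definition tree raises the level of the hierarchy only by a constant, is the substance of the inclusion $[\BB;\PC,\PRN,\MPPR]\subseteq\FPH$. Your sketch leaves the bookkeeping (the growth bound of Remark~\ref{r7}(2) through $\PC$ and $\PRN$, and the exact quantifier complexity of the ``stays constant for $k$ steps'' predicate) unverified, but the architecture is correct.

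Your lower-bound paragraph, however, is internally inconsistent as written: you say that ``at each of the exponentially many iterations either the current candidate is strictly $\prec$-extended or it has already stabilised'' and in the same breath that ``$\preceq$-chains within a polynomial length bound have only polynomial length.'' Both cannot hold of a recursion with $2^{p(n)}$ steps; the chain-length bound is the engine of the \emph{upper} bound and plays no role in the lower bound. The actual lower-bound construction is a witness search in which the value stays \emph{fixed} at almost every iteration: starting from $\epsilon$, the step function checks (via the $\Pi_{i-1}$ characteristic function, available in normal position on the recursion variable) whether the current index is a witness, and records success at most once --- monotonicity is then trivial since every word is $\succeq\epsilon$ and the value never changes afterwards. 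One such $\MPPR$ yields the characteristic function of a $\Sigma_i$-complete set, and closing under $\PC$ and $\PRN$ relative to it gives $\FPtime(\Sigma_i)=\Box_{i+1}$. With that paragraph repaired, your outline matches what \cite{BALO1x} must do, but as it stands the lower-bound step does not describe a working recursion.
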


\begin{remark}\label{r7} For all $f \in [\BB;\PC, \PRN, \PPR]$:
\begin{enumerate}
\item there exists a $F \in [\BB;\PC, \PRN, \PPR]$ such that 
$\forall \vec x, \vec y. F(\vec x,\vec y;) = f(\vec x; \vec y)$;
\item there exists a polynomial $q_f$ such that 
$\forall \vec x, \vec y. |f(\vec x;\vec y)| \le \max\{q_f(|\vec x|),
\max_i|y_i|\}$.
\end{enumerate}
This remark holds also if $[\BB;\PC, \PRN, \PPR]$ is replaced by
$[\BB;\PC, \PRN, \MPPR]$. 
\end{remark}
See \cite{Oit97} for details.

\subsection{Bounded approach}

Consider the class $[\II; \CC, \BRN, \BPR]$ where:
\begin{itemize}
\item $\II$ is the set of initial functions:
\begin{enumerate}
\item $\epsilon$,
\item $S_i(x) = xi$, \quad\izo,
\item $\pi^n_j(x_1,\dots,x_n) = x_j$, $1 \le j \le n$,
\item $Q(\epsilon,y,z_0,z_1) = y$, 
      $Q(xi,y,z_0,z_1) = z_i$, \quad\izo,
\item $\times(x,y) = 1^{|x|\times|y|}$.
\end{enumerate}
\item $C$, $\BRN$ and $\BPR$ are the following operators:
\begin{itemize}
\item Composition: Given $g$ and $\vec h$, their composition $f =
  C(g,\vec h)$ is given by $f(\vec x) = g(\vec h(\vec x))$,
\item Bounded recursion on notation: Given $g$, $h_0$, $h_1$, and $t$,
  the bounded recursion on notation $f = \BRN(g,h_0,h_1,t)$ is given by:
\begin{eqnarray*}
f(\epsilon,\vec x)  & = & g(\vec x) \\
f(yi,\vec x) & = & h_i(y,\vec x, f(y,\vec x)) |_{t(y,\vec x)}, \quad\izo
\end{eqnarray*}
\item Bounded primitive recursion:\ Given $g$, $h$, and $t$, the
  bounded primitive recursion $f = \linebreak\BPR(g,h,t)$ is given by
\begin{eqnarray*}
f(\epsilon,\vec x)  & = & g(\vec x) \\
f(y',\vec x) & = & h(y,\vec x, f(y,\vec x)) |_{t(y,\vec x)}
\end{eqnarray*}
\end{itemize}
\end{itemize}

\begin{proposition}
\begin{itemize}
\item $[\II; \CC, \BRN] = \FPtime$,
\item $[\II; \CC, \BRN, \BPR] = \FPspace$.
\end{itemize}
\end{proposition}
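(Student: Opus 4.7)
The plan is to derive both equalities from the predicative characterizations already established, namely $[\BB;\PC,\PRN] = \FPtime$ and $[\BB;\PC,\PRN,\PPR] = \FPspace$. It therefore suffices to prove the pair of algebra-equivalences $[\II;\CC,\BRN] = [\BB;\PC,\PRN]$ and $[\II;\CC,\BRN,\BPR] = [\BB;\PC,\PRN,\PPR]$. In both directions the translation proceeds by induction on derivations, using Remark~\ref{r7} to move between the two-sorted and the non-sorted setting.

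For the inclusion $\supseteq$, I would induct on predicative derivations. The elements of $\BB$ are easily reproduced inside $\II$ after dropping the semicolon by Remark~\ref{r7}(1); the only mildly delicate case is the safe successor $S_i(z;x)$, which is realized by composing the unsorted $S_i$ with the cut $|_z$, itself definable by a single use of $\BRN$. Predicative composition is a special case of composition. For $\PRN$ (and similarly $\PPR$) the crucial ingredient is Remark~\ref{r7}(2), which supplies a polynomial $q_f$ with $|f(\vec x;\vec y)| \le \max\{q_f(|\vec x|), \max_i|y_i|\}$; I would realise $q_f$ as a term $t$ built from iterated smash $\times$, concatenation and the initial functions in the usual Cobham style, and then replace $\PRN$ by $\BRN(g,h_0,h_1,t)$. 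Since $t$ overestimates the output length, the clipping $|_{t}$ leaves the intended value unchanged, and the simulation is exact. The $\PPR/\BPR$ case is identical.

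For the inclusion $\subseteq$, I would embed each initial function of $\II$ into $\BB$ by a suitable choice of sorting, simulate $\CC$ by $\PC$, and simulate $\BRN(g,h_0,h_1,t)$ by first performing $\PRN$ with the sorted versions of $g, h_0, h_1$ and then composing (via $\PC$) with a predicatively-defined cut $|_{t(y,\vec x)}$. The sort-structure lines up automatically: in $\BRN$ the recursion variable is normal and the recursion value appears in the final argument position, which is exactly the shape required by $\PRN$. The case $\BPR \leftrightarrow \PPR$ is wholly analogous.

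The main obstacle, and really the only place where nontrivial technical work is required, is the Cobham-style sub-lemma that every polynomial arising as a size bound on a function in $[\BB;\PC,\PRN,\PPR]$ can be represented as a term in $[\II;\CC]$ built from $\times$, concatenation and the initial functions. Once this is in hand, the translation of $\PRN$ and $\PPR$ into $\BRN$ and $\BPR$ is mechanical, and the remaining verification is routine sort-bookkeeping.
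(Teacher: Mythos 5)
The paper does not prove this proposition at all: it is stated as a known result, essentially due to Cobham and Thompson, with a pointer to \cite{Oit97,Oit01} for the formulation over $\WW$. Your strategy of deriving it instead from the predicative characterizations $[\BB;\PC,\PRN]=\FPtime$ and $[\BB;\PC,\PRN,\PPR]=\FPspace$ by an algebra-to-algebra translation is legitimate in principle, and it mirrors exactly the structure the paper uses later for the $\FPH$ theorem. But as written your sketch has a genuine gap, and it sits precisely where you declare the work to be ``routine sort-bookkeeping.''

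The problem is the direction $[\II;\CC,\BRN]\subseteq[\BB;\PC,\PRN]$. You claim the sort-structure ``lines up automatically'' because in $\BRN$ the recursion value appears in the final argument position. It does not line up. Your induction hypothesis hands you sorted versions of $h_0,h_1$ in which all arguments -- in particular the slot receiving the recursion value $f(y,\vec x)$ -- are \emph{normal}, whereas $\PRN$ demands that the recursion value enter the step function in a \emph{safe} position. Predicative composition only ever moves data downward: the normal inputs $\vec r(\vec x;)$ of $g$ must be computed from normal arguments alone, so a safe value (such as the previous recursion value) can never be promoted into a normal slot of $h_i$. Closing this gap is exactly the content of the Bellantoni--Cook simulation of Cobham's class: one must strengthen the induction hypothesis to produce $F$ with $F(w;\vec x)=f(\vec x)$ for all $w$ with $|w|\ge p_f(|\vec x|)$, i.e.\ with the genuine arguments placed in safe positions and an auxiliary normal ``length'' parameter carrying the growth information, and only at the very end convert back via Remark~\ref{r7}(1). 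Relatedly, you identify the representation of polynomial bounds by $\{\times,*\}$-terms as ``the main obstacle,'' but that part is standard since Cobham; the sorting issue above is the real one. (A minor further slip: $S_i(z;x)$ is not $S_i(x)|_z$ -- for $|x|\ge|z|$ the former returns $x$ while the latter truncates $x$ -- so that basic function needs an explicit case distinction via $Q$ rather than a single cut.) The converse direction $[\BB;\PC,\PRN]\subseteq[\II;\CC,\BRN]$ via Remark~\ref{r7}(2) is essentially fine.
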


These are well-known results, essentially due to Cobham \cite{Cob65}
and Thompson \cite{Tho71},
here formulated over $\WW$. See \cite{Oit97} 
or \cite{Oit01} for a reference.

$\PR$ is the usual operator for primitive recursion, i.e., $f =
\PR(g,h)$ means that $f$ is defined by primitive recursion, with $g$ as
base function and $h$ as step function. 

\begin{definition}
Given $g,h,t$, the \emph{monotone bounded primitive recursion scheme}
is defined by
$$\MBPR(g,h,t) = \PR(g,(h|_t)^m).$$
\end{definition}

\begin{remark}
Given a function $t(y,\vec x)$ in $[\II; \CC, \BRN, \MBPR]$, we may define
within the same class a function $t^+$, which is non-decreasing in the first argument, i.e., for $y_1 \le y_2$ we
have $|t^+(y_1,\vec x)| \le |t^+(y_2,\vec x)|$, such that for all $y,\vec
x$, $t(y,\vec x) \le t^+(y,\vec x)$. For instance:
\begin{align*}
t^+(\epsilon ,\vec x) & := t(\epsilon, \vec x), \\
t^+(y',\vec x) & := 
\begin{cases}
t(y',\vec x) & \text{if $|t(y,\vec x)| \le |t(y',\vec x)|$}, \\
t(y,\vec x) & \text{otherwise}.
\end{cases}
\end{align*}
In fact, if $t$ is itself non-decreasing in the first argument, then
$t^+$ is equal to $t$. 

Now, we get that 
$$\MBPR(g,h,t) = \PR(g,(h|_t)^m) = \BPR(g,(h|_t)^m,t^+).$$
\end{remark}

\begin{remark}\label{rht}
\begin{enumerate}
\item If $h,t \in [\II; \CC, \BRN, \MBPR]$ (or $[\BB;\PC, \PRN, \MPPR]$),
  then we have also $h|_t \in [\II; \CC, \BRN, \MBPR]$ (or $[\BB;\PC, \PRN, \MPPR]$, respectively). 
\item If $h \in [\II; \CC, \BRN, \MBPR]$ (or $[\BB;\PC, \PRN, \MPPR]$), then
we have $h^m \in [\II; \CC, \BRN, \MBPR]$ (or $[\BB;\PC, \PRN, \MPPR]$, respectively).
\end{enumerate}
Moreover, the function definitions of $h|_t$ and $h^m$ do not make any
extra use of the $\MBPR$ (or $\MPPR$ respectively) scheme (relatively to the definitions of $h$ and
$t$). 
\end{remark}

Define by bounded recursion on notation $P(\epsilon) = \epsilon$ and
$P(xi) = x|_x$ and $D(\epsilon,x) = x$ and $D(yi,x) = P(x)|_x$. Then
$x|_y = D(D(y,x),x)$. This justifies item (i) of the remark
above. Item (2) is an obvious consequence of $\preceq$ being decidable
in $\textsc{P}$. The case of $[\BB;\PC, \PRN, \MPPR]$ is similar.


\begin{theorem}
$[\II;\CC, \BRN, \MBPR] = \FPH.$
\end{theorem}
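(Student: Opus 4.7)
The plan is to prove the theorem by establishing the equivalence
\[
[\II;\CC,\BRN,\MBPR] \;=\; [\BB;\PC,\PRN,\MPPR],
\]
where the right-hand class is read as a class of unsorted functions via Remark \ref{r7}(i); combined with Proposition \ref{p6} this gives the theorem. The non-monotone analogue $[\II;\CC,\BRN,\BPR] = [\BB;\PC,\PRN,\PPR] = \FPspace$ is standard (see \cite{Oit97}), so the new work is to slot the monotone schemes into that template.

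For the inclusion $[\BB;\PC,\PRN,\MPPR] \subseteq [\II;\CC,\BRN,\MBPR]$, I would induct on the construction of $f \in [\BB;\PC,\PRN,\MPPR]$, producing for each $f(\vec x;\vec y)$ an unsorted $\tilde f(\vec x,\vec y)$ in the bounded class that agrees pointwise. Basic functions, $\PC$ and $\PRN$ are handled as in the $\FPspace$ case, the required bounds coming from Remark \ref{r7}(ii) and being themselves definable in $[\II;\CC,\BRN]$ from $\times$ and composition. For the new scheme $f = \MPPR(g,h) = \PPR(g,h^m)$, Remark \ref{r7}(ii) applied both to $f$ and to $h$ yields polynomials dominating $|f|$ and $|h|$; choosing $t \in [\II;\CC,\BRN]$ that realises such a dominating polynomial, the truncation $\cdot|_t$ is never active along the recursion, so $(\tilde h|_t)^m$ and $\tilde h^m$ agree on the values that actually occur, and $\MBPR(\tilde g,\tilde h,t)$ computes $\tilde f$.

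For the reverse inclusion, I would again induct on construction. Initial functions of $\II$ become basic functions of $\BB$ by declaring every argument normal (permuting positions via $\PC$ and projections as needed), and $\CC$ becomes a degenerate $\PC$. $\BRN(g,h_0,h_1,t)$ is simulated by $\PRN$ with step functions $h_i(\ldots)|_t$, which remain in $[\BB;\PC,\PRN,\MPPR]$ by Remark \ref{rht}(i). The key case $\MBPR(g,h,t) = \PR(g,(h|_t)^m)$ is then immediate: since $\PR$ on numeric successor coincides with $\PPR$, one has $\PR(g,(h|_t)^m) = \PPR(g,(h|_t)^m) = \MPPR(g,h|_t)$, and Remark \ref{rht}(i) keeps $h|_t$ in the predicative class. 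The main obstacle I expect is the routine but error-prone sort bookkeeping — ensuring at each inductive step that the available predicative versions of the subfunctions have exactly the normal/safe split required by the next scheme, repeatedly invoking Remark \ref{r7}(i) to move safe arguments back to normal positions — together with verifying the ``truncation never fires'' claim used in the $\MPPR\to\MBPR$ simulation, which however is a direct consequence of the size bound in Remark \ref{r7}(ii).
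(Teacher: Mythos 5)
Your overall strategy coincides with the paper's: show that $[\II;\CC,\BRN,\MBPR]$ and $[\BB;\PC,\PRN,\MPPR]$ compute the same (unsorted) functions and then invoke Proposition~\ref{p6}. Your treatment of the inclusion $[\BB;\PC,\PRN,\MPPR]\subseteq[\II;\CC,\BRN,\MBPR]$ is sound and is essentially the paper's: Remark~\ref{r7}(2) supplies a bound $t$ (definable from $\times$ and composition) under which the truncation $|_t$ is never active along the recursion, so $\MPPR(g,h)$ becomes a literal instance of $\MBPR$.

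The gap is in the converse inclusion, exactly in the case you declare ``immediate''. The chain $\PR(g,(h|_t)^m)=\PPR(g,(h|_t)^m)=\MPPR(g,h|_t)$ does not typecheck as an induction step: $\PPR$ and $\MPPR$ require a \emph{sorted} step function whose recursion-value argument sits in a \emph{safe} position, whereas your inductive hypothesis delivers $h$ (hence $h|_t$) only in the form of Remark~\ref{r7}(1), with \emph{all} arguments normal. Remark~\ref{r7}(1) moves arguments from safe to normal, i.e.\ in the direction you do not need, and moving an argument from normal to safe is impossible in general, since it would contradict the length bound of Remark~\ref{r7}(2) (consider $\times$). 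The same defect already affects your $\BRN\to\PRN$ case. What actually closes this hole is the Bellantoni--Cook-style simulation: a strengthened induction hypothesis producing a sorted version of each function with the prospective recursion value in a safe slot and an additional normal argument bounding the output, correct whenever the safe input respects that bound. This is precisely the content of the proof of Theorem~3.2 of \cite{Oit97} to which the paper delegates, combined with Remark~\ref{rht} to check that forming $h|_t$ and $(\cdot)^m$ costs no extra applications of the monotone scheme. As written, the hardest case of the theorem is asserted rather than proved.
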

\begin{proof}
We prove that 
\begin{enumerate}
\item for all $f \in [\II;\CC, \BRN, \MBPR]$ there exists a $ F \in
  [\BB;\PC, \PRN, \MPPR]$ such that $\forall \vec x. f(\vec x) =
  F(\vec x;)$;
\item for all $F \in [\BB;\PC, \PRN, \MPPR]$ there exists a $f \in
  [\II;\CC, \BRN, \MBPR]$ such that $\forall \vec x, \vec y. F(\vec x;
  \vec y) = f(\vec x, \vec y)$.
\end{enumerate}
This shows that $[\II;\CC, \BRN, \MBPR]$ and $[\BB;\PC, \PRN, \MPPR]$
can be identified. Thus, the present statement is a consequence of
Proposition~\ref{p6}.

(1) is proven by induction on the complexity of the function
definitions. The proof is analogous to the proof of Theorem 3.2 in
\cite[p.~121]{Oit97}. It uses remark \ref{rht}. 

The proof of (2) is straightforward, by induction on the complexity of
the function definition of $F \in  [\BB;\PC, \PRN, \MPPR]$. It uses
remark \ref{r7}(2). Obviously, the $\BB$ functions (4)--(6) are
defined using bounded recursion on notation.
\end{proof}

\section{The theory $\aph$}

The applicative theory $\aph$ is based on the basic theory $\Bb$ of
operations and words, as introduced by Strahm in \cite[\S~3.1]{Str03}, with
slight modifications indicated below. In particular, our application
is total, while Strahm works in a partial setting.

We formulate $\Bb$ in a standard first order language, with
\emph{individual variables} $x,y,z,\dots$, \emph{individual
  constants}: $\ka, \ess$ (combinators); $\textsf{p}, \textsf{p}_0,
\textsf{p}_1$ (pairing and projection); $\cw$ (case distinction);
$\epsilon$ (empty word); $\textsf{s}_0, \textsf{s}_1$ (binary
successors), $\textsf{p}_\We$ (binary predecessor); $\textsf{s}_\ell,
\textsf{p}_\ell$ (lexicographic successor and predecessor); $\csubs$
(initial subword relation); $*, \times$ (word concatenation and word
multiplication). There is one binary function symbol $\cdot$ for term
application, which, however, is usually written by juxtaposition. We
have only one unary relation symbol $\We$ (binary words), and one
binary relation symbol $=$ (equality). \emph{Terms} ($r,s,t,\dots$)
are build from variables and constants by term application. 

We use the usual abbreviations of the framework of applicative
theories, which include, in particular, the following
ones:\rrandnotiz{$\not=$; $f(x,y)$ vs. $f\,x\,y$}
\begin{align*}
0 & := \sz{\epsilon}, \\
1 & := \so{\epsilon}, \\
s \subseteq t & := \csubs\,s\,t = 0, \\
s \le t & := \lw\,s \subseteq \lw\,t, \\
s * t & := *\,s\,t, \\
s \times t & := \times\,s\,t. 
\end{align*}
As we will define $\lw\,t$ by $1 \times t$, $s \le t$ stands actually
for $1 \times s \subseteq 1 \times t$.\footnote{Note that, in $\aph$
  the relation $\le$ compares the lengths of the terms, while we used
  the same symbol before, outside $\aph$, to compare the terms themselves.} For $w \in \WW$, $\overline{w}$
is the corresponding applicative term. 

\emph{Formulas} are usual first-order formulas, build from the atomic
formulas $\We(t)$ and $t=s$ by use of negation ($\neg$), conjunction
($\wedge$), disjunction ($\vee$), implication ($\to$), and universal
($\forall x$) and existential ($\exists x$) quantification. 
As abbreviation we use
\begin{align*}
\forall x \in \We. \phi & := \forall x. \We(x) \to \phi,\\
\exists x \in \We. \phi & := \exists x. \We(x) \wedge \phi,\\
\exists x \le t. \phi & := \exists x \in \We. x \le t \wedge \phi,\\
t : \We \to \We & := \forall x \in \We. \We(t\,x),\\
t : \We^2 \to \We & := \forall x \in \We. \forall y \in \We. \We(t\,x\,y).
\end{align*}

Note that Strahm formulates $\Bb$ within the \emph{logic of partial
  terms}, which includes an extra existence
predicate. However, for
the present purpose, partiality is not essential and hence we stick to
total application. Thus, our logic is
standard, \emph{classical} first order logic. For more background on
applicative theories see, for instance, \cite{Bee85}, \cite{JKS99}, or \cite{Kah07}.

The non-logical axioms of $\Bb$ are the following ones:\footnote{In
  \cite{Str03}, Strahm axiomatizes also the tally length of binary
  words, $\lw$, since his theory $\Bb$ does not include word
  concatenation and word multiplication from the very beginning. In
  the presence of word multiplication the tally length can be defined
  by letting $\lw\,t = 1 \times t$.}

\begin{enumerate}
\item[I.] Combinatory algebra and pairing
\begin{enumerate}
\item[(1)] $\ka\,x\,y = x$,
\item[(2)] $\ess\,x\,y\,z = x\,z\,(y\,z)$,
\item[(3)] $\textsf{p}_0(\textsf{p}\,x\,y) = x \wedge
\textsf{p}_1(\textsf{p}\,x\,y) = y$.
\end{enumerate}
\item[II.] Definition by cases on $\We$.\footnote{Our case distinction
    checks the last bit of a word, while Strahm uses a case
    distinction which compares words as a whole.}\rrandnotiz{Comparing
    in detail?}
\begin{enumerate}
\item[(4)] $\cw\,\epsilon\,s\,r\,u = s,$
\item[(5)] $\We(t) \to \cw\,(\sz{t})\,s\,r\,u = r,$
\item[(6)] $\We(t) \to \cw\,(\so{t})\,s\,r\,u = u,$
\end{enumerate}
\item[III.] Closure, binary successors, and predecessors
\begin{enumerate}
\item[(7)] $\We(\epsilon) \wedge \forall x. \We(x) \to \We(\sz{x})
  \wedge \We(\so{x}),$
\item[(8)] $\sz{x} \not= \so{x} \wedge \sz{x} \not= \epsilon \wedge \so{x}
  \not= \epsilon,$
\item[(9)] $\textsf{p}_\We: \We \to \We \wedge \pw{\epsilon} = \epsilon,$
\item[(10)] $\We(x) \to \pw{(\sz{x})} = x \wedge \pw{(\so{x})} = x,$
\item[(11)] $\We(x) \wedge x \not= \epsilon \to \sz{(\pw{x})} = x \vee
  \so{(\pw{x})} = x.$
\end{enumerate}
\item[IV.] Lexicographic successor and predecessor
\begin{enumerate}
\item[(12)] $\textsf{s}_\ell : \We \to \We \wedge \sll{\epsilon} = 0,$
\item[(13)] $\We(x) \to \sll{(\sz{x})} = \so{x} \wedge \sll{(\so{x})}
      = \sz{(\sll{x})},$
\item[(14)] $\textsf{p}_\ell : \We \to \We \wedge \sll{\epsilon} = \epsilon,$
\item[(15)] $\We(x) \to  \pll{(\sll{x})} = x,$
\item[(16)] $\We(x) \wedge x \not= \epsilon \to \sll{(\pll{x})} = x.$
\end{enumerate}
\item[V.] Initial subword relation
\begin{enumerate}
\item[(17)] $\We(x) \wedge \We(y) \to \csubs\,x\,y = 0 \vee \csubs\,x\,y = 1$,
\item[(18)] $\We(x) \to (x \subseteq \epsilon \leftrightarrow x = \epsilon)$,
\item[(19)] $\We(x) \wedge \We(y) \wedge y \not= \epsilon \to 
(x \subseteq y  \leftrightarrow x \subseteq \pw{y} \vee x = y),$
\item[(20)] $\We(x) \wedge \We(y) \wedge \We(z) \wedge x \subseteq y
  \wedge y \subseteq z \to x \subseteq z.$
\end{enumerate}
\item[VI.] Word concatenation
\begin{enumerate}
\item[(21)] $*: \We^2 \to \We$,
\item[(22)] $\We(x) \to x*\epsilon = x$,
\item[(23)] $\We(x) \wedge \We(y) \to 
x * (\sz{y}) = \sz{(x*y)} \wedge 
x * (\so{y}) = \so{(x*y)}$.
\end{enumerate}
\item[VII.] Word multiplication
\begin{enumerate}
\item[(24)] $\times: \We^2 \to \We$,
\item[(25)] $\We(x) \to x \times \epsilon = \epsilon$,
\item[(26)] $\We(x) \wedge \We(y) \to x \times \sz{y} = (x \times y)*
  x \wedge x \times \so{y} = (x \times y)* x$.
\end{enumerate}
\end{enumerate}

\paragraph{Induction on notation.}

$$f: \We \to \We \wedge \phi(\epsilon) \wedge (\forall x \in
\We. \phi(x) \to \phi(\sz{x}) \wedge \phi(\so{x})) \to \forall x \in
\We. \phi(x),$$
where $\phi(x)$ is of the form $\exists y \le f\,x.\psi(f,x,y)$ for
$\psi(f,x,y)$ a \emph{positive and $\We$-free}
formula.\footnote{Positive formulas are defined, as usual, as negation
  and implication free formulas.}

This induction is called $\sbwiw$ in \cite{Str03}.

\paragraph{Monotonicity relation.}



It is easy to observe that the monotonicity relation $\preceq$ is
polytime decidable. As the theory $\Bb+\sbwiw$ allow to represent all
polytime functions (as provably total functions in the sense of
Definition \ref{provtotal} below), we know that there is term
$t_{\chi_\preceq}$ with 
\begin{enumerate}
\item $\Bb+\sbwiw \vdash
  t_{\chi_\preceq}\,\overline{w_1}\,\overline{w_2} =
  \overline{\chi_\preceq(w_1,w_2)}$, for all $w_1,w_2 \in \WW$, and
\item $\Bb+\sbwiw \vdash \forall x,y. \We(x) \wedge \We(y) \to t_{\chi_\preceq}\,x\,y = 0 \vee t_{\chi_\preceq}\,x\,y = 1.$
\end{enumerate}

In the following, we will use $\cmono$ as abbreviation for $\lambda
x,y. t_{\chi_\preceq}\,y\,x$. Moreover, $s \succeq t$ is used as
abbreviation of $\cmono\,s\,t = 0$.  
We
also introduce quantifier $\exists x \succeq t. \phi$ as abbreviation
for $\exists x. \We(x) \wedge x \succeq t \wedge \phi$.

Note that 2.\ above means that $\cmono$ is total as function from $\We
^2 \to \We$. But, of course, $\cmono$ is not total as a binary
relation, as we have, for 
instance, $01 \not\succeq 10$ and $10 \not\succeq 01$. 

\begin{remark}\label{lepreceq} For $u$ and $v$ in $\We$, we can show in $\aph$:
\begin{enumerate}
\item $u \le v \to u \preceq 1 \times v$,
\item $u \preceq v \to u \le v$.
\end{enumerate}
And we can define a low-level pairing function $\langle
\cdot,\cdot\rangle$ and projections $(\cdot)_0$ and $(\cdot)_1$ on
$\We$, which are, at most, in $\textsc{FPtime}$, such that $\aph$ proves
for the
representing terms:\randnotiz{Referenz? II.6}
\begin{enumerate}
\item[3.] $u \preceq \langle u,v\rangle$ and $v \preceq \langle
  u,v\rangle$,
\item[4.] $(u)_0 \preceq u$ and $(u)_1 \preceq u$.
\end{enumerate}
\end{remark}

\paragraph{Monotone induction $\sbwmpi$.}
\begin{multline*}
t: \We \to \We \wedge 
(\exists x\in \We.\phi(\epsilon,x)) \wedge 
(\forall y \in \We.\forall x\in \We. \phi(y,x) \to 
\exists z \succeq x. \phi(\sll{y},z))
\to \\
\forall y \in \We. \exists x\in \We. \phi(y,x),
\end{multline*}
where $\phi(y,x)$ is of the form $x \le t\,y \wedge \psi(t,y,x)$ for
$\psi(t,y,x)$ a \emph{positive and $\We$-free} formula \emph{not
  containing disjunctions}. For the reason of the exclusion of
disjunctions, see remark \ref{dis} below.\rrandnotiz{Remark about
  heuristics; is this scheme intuitionistically valid??}

Essentially, $\aph$ is equal to Strahm's theory
  $\textsf{PT}$ plus the monotone induction scheme $\sbwmpi$.

\section{The lower bound}

\begin{definition}\label{provtotal}
A function $F: \WW^n \to \WW$ is called \emph{provably total in
  $\aph$}, if there exists a closed term $t_F$ such that
\begin{enumerate}
\item $\aph \vdash t_F\,\overline{w_1}\,\dots\,\overline{w_n} =
  \overline{F(w_1,\dots,w_n)}$ for all $w_1,\dots,w_n \in \WW$, and
\item $\aph \vdash t_F: \We^n \to \We$.
\end{enumerate}
\end{definition}


Using the result of \cite[\S~4]{Str03} about the provably total
function in Strahm's theory corresponding to \textsc{FPtime}, it
remains to show that functions defined by the monotone bounded
primitive recursion scheme $\MBPR(g,h,t)$ are provably total in
$\aph$. 

So, let us assume that $g$, $h$, and $t$ are provably total in
$\aph$, and $f$ be defined as $\MBPR(g,h,t) = \PR(g,(h|_t)^m)$.

Now, in $\aph$, let 
\begin{align*}
f(\epsilon,\vec z) & = g(\vec z) \\
f(\sll{y},\vec z) & =
\begin{cases}
h|_t(y, \vec z,f(y,\vec z)) & \text{if $f(y,\vec z) \preceq h|_t(y,\vec
  z,f(y,\vec z))$} \\
f(y,\vec z) & \text{otherwise}
\end{cases}
\end{align*}
and we show by monotone induction that $\forall y \in \We. \exists
x\in \We. x \le
t_f(y,\vec z) \wedge f(y,\vec z) = x$
, where $$t_f(y,\vec
x) = \begin{cases} g(\vec z) & \text{if $y = \epsilon$,} \\
t^+(y,\vec z) & \text{otherwise.}
\end{cases}$$

\emph{Induction base:}
As $f(\epsilon,\vec z) = g(\vec z)$, and $g$ is provably total in
$\aph$, we have $\exists x\in \We. x \le g(\vec z) \wedge f(\epsilon,\vec z) = x$.

\emph{Induction step:}
\ We have to show that $\forall y\in \We.\forall x\in \We. x \le t_f(y,\vec z)
\wedge f(y,\vec z) = x \to
\exists x_1 \succeq x.\linebreak[4] x_1 \le t_f(\sll{y},\vec z) \wedge
f(\sll{y},\vec z) = x_1$.  

By definition, 
\begin{align*}
f(\sll{y},\vec z) & =
\begin{cases}
h|_t(y, \vec z,f(y,\vec z)) & \text{if $f(y,\vec z) \preceq h|_t(y,\vec
  z,f(y,\vec z))$}, \\
f(y,\vec z) & \text{otherwise}.
\end{cases}
\end{align*}

In the first case, the assertion follows immediately from the
condition $f(y,\vec z) \preceq h|_t(y,\vec z,f(y,\vec z))$.

In the second case, the assertion follows immediately from the premise
(choosing $x_1 := x$). 

Thus, we can conclude by monotone induction that 
$\forall y \in \We. \exists x\in \We. x \le
t_f(y,\vec z) \wedge f(\sll{y},\vec z) = x$. 

Thus, we get the following result:
\begin{lemma}
The provably total functions of $\aph$ include $\FPH$.
\end{lemma}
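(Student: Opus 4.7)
The plan is to use the Cobham-style characterization $[\II; \CC, \BRN, \MBPR] = \FPH$ proved in the preceding theorem and argue by induction on the build-up of the function algebra. For the initial functions and for closure under $\CC$ and $\BRN$, which together already capture $\FPtime$, I would simply cite Strahm's analysis in \cite[\S~4]{Str03}: it shows that $\Bb + \sbwiw$, a subtheory of $\aph$, has exactly the $\FPtime$ functions as its provably total functions, so these cases come for free. The substantive work is closure under $\MBPR$: given provably total $g$, $h$, $t$ witnessed by closed terms $t_g, t_h, t_t$, I must construct a closed term $t_f$ for $f = \MBPR(g,h,t) = \PR(g,(h|_t)^m)$ and prove $t_f : \We^{n+1} \to \We$ in $\aph$.

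For the construction of $t_f$, the combinator axioms yield a standard recursion combinator, so I can build a closed term provably satisfying $t_f\,\epsilon\,\vec z = t_g\,\vec z$ and a lexicographic-successor clause that evaluates to the truncated update $(t_h|_{t_t})(y,\vec z, t_f\,y\,\vec z)$ when this $\succeq$-dominates $t_f\,y\,\vec z$ and to $t_f\,y\,\vec z$ otherwise, with both the guard and the branching realised inside $t_f$ via $\cmono$ and $\cw$. Totality is then established by applying $\sbwmpi$ to the formula $\phi(y,x) \equiv x \le t_f^+(y,\vec z) \wedge t_f\,y\,\vec z = x$, where $t_f^+$ is the $\preceq$-non-decreasing bound built from $t$ as in the earlier remark. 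The base case follows from provable totality of $g$. The step case is exactly what the monotone scheme is designed for: the first branch supplies $x_1 \succeq x$ by the very $\preceq$-test that guards it, the second branch takes $x_1 := x$, and the new value remains within $t_f^+(\sll{y},\vec z)$ by monotonicity of $t_f^+$ in its first argument.

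The main obstacle I anticipate is syntactic rather than mathematical: the formula supplied to $\sbwmpi$ must be positive, $\We$-free, and \emph{disjunction-free}. The case split inherent in $\MBPR$ therefore cannot be expressed in the induction formula via $\vee$; it must be absorbed inside $t_f$ using $\cw$ and $\cmono$, so that the induction formula sees only the atomic equation $t_f\,y\,\vec z = x$ together with the bound $x \le t_f^+(y,\vec z)$. One also has to confirm that $h|_t$ and its $\preceq$-guarded monotone section are representable by terms without introducing any fresh inductive content (cf.\ the remark on $h^m$ and $h|_t$), so that only $f$ itself is being inducted on. Once these bookkeeping points are checked, $\sbwmpi$ yields $\forall y \in \We. \exists x \in \We.(x \le t_f^+(y,\vec z) \wedge t_f\,y\,\vec z = x)$, which in particular gives $t_f : \We^{n+1} \to \We$, completing the proof.
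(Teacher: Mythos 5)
Your proposal follows essentially the same route as the paper: the $\FPtime$ part is inherited from Strahm's analysis of $\Bb+\sbwiw$, and closure under $\MBPR$ is obtained by absorbing the $\preceq$-guarded case split into the term $t_f$ itself (via $\cw$ and $\cmono$) and then applying $\sbwmpi$ to the disjunction-free formula $x \le (\text{bound}) \wedge t_f\,y\,\vec z = x$, with the two branches of the step handled exactly as you describe. The only deviation is your choice of bound: the paper's bound equals $g(\vec z)$ at $y=\epsilon$ and $t^+(y,\vec z)$ otherwise, since $t^+(\epsilon,\vec z)$ alone need not dominate $g(\vec z)$ in the induction base --- a bookkeeping point your $t_f^+$ should be adjusted to accommodate.
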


\section{The upper bound}

The proof of the upper bound follows quite closely the proof of the
upper bound of Strahm for his theory $\textsf{PT}$ in
\cite[\S~6]{Str03}. For it, one reformulates the theory first in
Gentzen's classical sequence calculus, and proves partial cut
elimination, such that the remaining cuts are restricted to positive
formulas. In a second step, one realizes positive derivations with
realizers from the appropriate complexity class. In this step, one
uses the open term model $\mathcal{M}(\lambda\eta)$ of the applicative
ground structure, which is based on the usual $\lambda\eta$ reduction
of the untyped $\lambda$-calculus. In fact, $\eta$ allows us to treat
extensionality of operations, i.e., we may add the following axiom to
$\aph$:
\begin{itemize}
\item[(\textsf{Ext})\hspace*{-10pt}] \hspace*{10pt}$\forall f, g. (\forall x. f\,x = g\,x) \to f = g$.
\end{itemize}

For the treatment of $\aph$, we will follow Strahm's proof for
$\textsf{PT}$, and check only, how to take care of our additional
monotone induction scheme $\sbwmpi$.

Let $\aphp$ the Gentzen-style sequent calculus reformulation of
$\aph$ such that all main formulas of non-logical axioms and rules are
positive. 
In this calculus, the monotone induction $\sbwmpi$ is
rewritten as the following rule:

\newcommand{\seqq}[2]{\seq{\Gamma,#1}{#2,\Delta}}

$$
\frac{\stackrel{\displaystyle 
\stackrel{\displaystyle \seqq{\We(u)}{\We(t\,u)}}{
\seq{\Gamma}{\exists n. \We(n) \wedge \phi(\epsilon,n),\Delta}}}{
\seqq{\We(a),\We(b),\phi(a,b)}{\exists m \succeq b.\phi(\sll{a},m)}
}}%
{\seqq{\We(s)}{\exists n. \We(n) \wedge
    \phi(s,n)}},$$
where $\phi(s,n)$ is of the form $n \le t\,s \wedge \psi(t,s,n)$ for
$\psi(t,s,n)$ a positive and $\We$-free formula which does not contain
disjunctions. 

We write $\aphp \vdash \seq{\Gamma}{\Delta}$ if the sequent
$\seq{\Gamma}{\Delta}$ is derivable in $\aphp$, and $\aphp
\vdash_{\!\!\!*} \seq{\Gamma}{\Delta}$ if it has a proof where all cut
formulas are \emph{positive}.

\subsection{Partial cut elimination}

\begin{theorem}[Partial cut elimination, cf.~{\cite[Theorem~12]{Str03}}]
  For all sequents $\seq{\Gamma}{\Delta}$, $\aphp \vdash
  \seq{\Gamma}{\Delta}$ implies $\aphp \vdash_{\!\!\!*}
  \seq{\Gamma}{\Delta}$.
\end{theorem}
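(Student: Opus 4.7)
The plan is to follow the classical Gentzen-style partial cut elimination argument used by Strahm for his theory $\textsf{PT}$ in \cite[\S~5]{Str03}, checking that the new ingredient, the monotone induction rule derived from $\sbwmpi$, does not obstruct the standard reduction procedure. Concretely, I would proceed by a double induction on the pair (cut-rank, height), where the cut-rank of a derivation is the maximal logical complexity of a non-positive cut formula occurring in it. The aim is the usual one: reduce a topmost cut whose cut formula is non-positive either in complexity or in its height of occurrence, until only cuts on positive formulas remain.

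The logical reduction steps are routine: a cut on $A \to B$ is replaced by cuts on $A$ and on $B$, a cut on $\neg A$ by a cut on $A$, and cuts on quantified formulas are handled by the usual substitution-plus-eigenvariable argument. The structural point which makes the whole argument go through, and which is already crucial in Strahm's formulation, is that every principal formula of a non-logical axiom or schematic rule of $\aphp$ is positive. Consequently, a non-positive cut formula can never arise as a principal formula on one side of the cut, so it can always be permuted upward through the inference immediately beneath it.

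For the new monotone induction rule, the conclusion $\exists n.\, \We(n) \wedge \phi(s,n)$ is positive by the schematic constraints on $\phi$, and each upper premise introduces only positive material: $\We(t\,u)$ in the totality premise, an existential positive formula in the base premise, and $\phi$ together with $\We$-atoms and a positive existential in the step premise. Hence, when a cut on a non-positive formula $C$ sits directly below an application of this rule, $C$ can only occur in the side contexts $\Gamma,\Delta$, and the cut may be permuted above each premise by the standard context-duplication trick. The resulting cuts take place on strictly shorter subderivations, so the inductive hypothesis applies.

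The main obstacle I anticipate is the book-keeping that ensures the schematic side-conditions on $\phi$ and $\psi$ (positive, $\We$-free, and disjunction-free) are preserved throughout the substitutions performed during quantifier cut reductions inside the induction premises. This is however not a genuine difficulty: substitution of terms for individual variables preserves each of these syntactic properties, and the cut reductions themselves never rewrite $\phi$; they only rearrange the side contexts and decompose subformulas of the non-positive cut formula $C$, which lies entirely outside the induction scheme. Once this is verified, Strahm's proof of \cite[Theorem~12]{Str03} carries over almost verbatim and yields the desired $\aphp \vdash_{\!\!\!*} \seq{\Gamma}{\Delta}$.
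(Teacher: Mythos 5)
Your proposal is correct and follows essentially the same route as the paper: both reduce the matter to Strahm's partial cut elimination for $\textsf{PT}$ and observe that the only new point to verify is that all main formulas of the added monotone induction rule --- in particular $\exists m \succeq b.\,\phi(\sll{a},m)$, which unfolds to a positive formula under the schematic constraints on $\phi$ --- are positive, so that non-positive cut formulas live only in the side contexts and the standard reduction goes through. The paper states just this positivity check and delegates the rest to \cite[Theorem~12]{Str03}, whereas you additionally spell out the standard double induction on cut-rank and height; the content is the same.
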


We only have to check that the main formulas of our induction rules
are positive, but that is the case since, in particular, $\exists m
\succeq b.\phi(\sll{a},m)$ is positive.\rrandnotiz{Maybe more explicit...}

\begin{corollary}[cf.~{\cite[Corollary~13]{Str03}}] If
  $\seq{\Gamma}{\Delta}$ is a sequent of positive formulas with $\aphp
  \vdash \seq{\Gamma}{\Delta}$, then there is a $\aphp$ derivation of
  $\seq{\Gamma}{\Delta}$ which contains only positive formulas.
\end{corollary}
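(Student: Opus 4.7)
The plan is to obtain the corollary as a direct consequence of the partial cut elimination theorem together with the subformula property for derivations whose cuts are restricted. First, I would apply the preceding theorem to the given derivation to obtain a proof $\mathcal{D}$ of $\seq{\Gamma}{\Delta}$ in $\aphp$ in which every cut formula is positive.

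Next, I would establish the following invariant by induction on the depth of $\mathcal{D}$: every formula occurring in any sequent of $\mathcal{D}$ is a subformula of either a formula in the endsequent $\seq{\Gamma}{\Delta}$ or of a cut formula used somewhere in $\mathcal{D}$. For the logical rules of the classical sequent calculus this is immediate from their schematic form; side formulas are inherited from premise to conclusion, principal formulas in the conclusion contain the active formulas in the premises as subformulas, and cut formulas contribute themselves as the only new material. For the non-logical axioms and rules of $\aphp$, the construction of $\aphp$ was arranged so that all main formulas are positive; in particular the main formula of the $\sbwmpi$ rule, $\exists n.\We(n)\wedge\phi(s,n)$, is positive, and the premises of that rule only introduce subformulas of it together with side formulas inherited from $\Gamma,\Delta$. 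Hence the invariant is maintained at every inference.

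Finally, I would close the argument by observing that positive formulas — being built from atoms by $\wedge,\vee,\forall,\exists$ only — are closed under taking subformulas. By hypothesis every formula in $\Gamma\cup\Delta$ is positive, and by partial cut elimination every cut formula in $\mathcal{D}$ is positive; the invariant therefore forces every formula occurring anywhere in $\mathcal{D}$ to be positive, which is the desired conclusion. I do not expect any genuine obstacle: the only point that needs verification is that the novel rule for monotone induction $\sbwmpi$ fits the pattern of a non-logical rule with positive main formula, and this was already noted in the discussion preceding the partial cut elimination theorem.
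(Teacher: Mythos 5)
Your argument is correct and is exactly the standard route the paper (implicitly, by deferring to Strahm's Corollary~13) takes: apply partial cut elimination, then use the subformula property together with the facts that positive formulas are closed under subformulas and that $\aphp$ was set up so that all main formulas of non-logical axioms and rules --- including the $\sbwmpi$ rule --- are positive. No gap; the one point needing care, that the monotone induction rule's main formula $\exists n.\We(n)\wedge\phi(s,n)$ is positive, is the same point the paper checks for the cut elimination theorem itself.
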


\subsection{Realizability}

\begin{definition}
Let $\rho \in \WW$ and $\phi$ a positive formula. Then
$\rr{\rho}{\phi}$ is inductively defined as follows:\footnote{Here
  $\langle \cdot,\cdot \rangle$ is a low-level pairing function on
  binary words, with its projections $(\cdot)_0$ and $(\cdot)_1$.}
\begin{eqnarray*}
\rr{\rho}{\We(t)} & \quad \textrm{if} \quad  & \MMle \models t = \overline{\rho}, \\
\rr{\rho}{{(t_1 = t_2)}} & \quad \textrm{if} \quad & \rho = \epsilon
 \textrm{ and } \MMle \models t_1 = t_2, \\
\rr{\rho}{{(\phi \wedge \psi)}} & \quad \textrm{if} \quad &
 \rho = \langle \rho_0,\rho_1\rangle \text{ and }
  \rr{\rho_0}{\phi} \text{ and } \rr{\rho_1}{{\psi}}, \\
\rr{\rho}{{(\phi \vee \psi)}} & \quad \textrm{if} \quad  & 
 \rho = \langle i,\rho_0\rangle \text{ and either $i = 0$ and
   $\rr{\rho_0}{\phi}$ or $i = 1$ and $\rr{\rho_0}{\psi}$}, \\
\rr{\rho}{{(\forall x. \phi(x))}} & \quad \textrm{if} \quad  & 
 \rr{\rho}{\phi(u)} \text{ for a fresh variable $u$},\\
\rr{\rho}{{(\exists x. \phi(x))}} & \quad \textrm{if} \quad  &
 \rr{\rho}{\phi(t)} \text{ for some term $t$}.
\end{eqnarray*}
\end{definition}

$\rho$ realizes a sequence $\Delta$ of $n$ formulas
$\phi_1,\dots,\phi_n$, if $\rho = \langle i_2,\rho_0\rangle$, $1 \le i
\le n$, $i_2$ the dyadic representation of the natural number $i$, and $\rr{\rho_0}{\phi_i}$.

To improve readability, we use the following abbreviations regarding
our low-level pairing in the context of realizability:
When we $\rho$ realizes a conjunction $\phi \wedge \psi$,
$\pleft{\rho}$ for the $(\rho)_0$, i.e., the realizer of $\phi$, and,
analogously $\pright{\rho}$ for the realizer $(\rho)_1$ of
$\psi$. When $\rho$ realizes a sequence $\phi_1,\dots,\phi_n$, we
write $\pnumber{\rho}$ for $(\rho)_0$, i.e., the index of the realized
formula, and $\pselect{\rho}$ for $(\rho)_1$, the realizer of the
selected formula.

\begin{theorem}[Realizability for $\aphp$, cf.~{\cite[Theorem~15]{Str03}}]
Let $\seq{\Gamma}{\Delta}$ be a sequent of positive formulas with
$\Gamma = \phi_1,\dots,\phi_n$ and assume that $\aphp \vdash_{\!\!\!*}
\seq{\Gamma[\vec u]}{\Delta[\vec u]}$. Then there exists a function
$F: \WW^n \to \WW$ in $\FPH$ such that for all terms $\vec{s}$
and all $\rho_1,\dots,\rho_n \in \WW$:
$$\rr{\rho_1}{\phi_1[\vec{s}]},\dots,\rr{\rho_n}{\phi_n[\vec{s}]} \qquad
\Longrightarrow \qquad \rr{F(\rho_1,\dots,\rho_n)}{\Delta[\vec s]}.$$
\end{theorem}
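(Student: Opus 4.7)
The plan is to prove the theorem by induction on the length of a cut-restricted derivation $\aphp \vdash_{\!\!\!*} \seq{\Gamma[\vec u]}{\Delta[\vec u]}$, constructing for each rule an $\FPH$ realizer from the realizers obtained for the premises. Most cases are inherited verbatim from Strahm's treatment of $\textsf{PT}$ in \cite[\S~6]{Str03}: the $\Bb$-axioms correspond to basic polytime operations on binary words; the logical rules admit the standard realizer constructions (pairing for $\wedge$, tagging for $\vee$, a witnessing term for $\exists$, free-variable abstraction for $\forall$); structural rules and positive cuts compose realizers; and the $\sbwiw$-induction rule is realized by a bounded recursion on notation. All these constructions live in $\FPtime \subseteq \FPH$, so they lift unchanged.

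The only genuinely new case is the monotone induction rule $\sbwmpi$. By the induction hypothesis I would have (i) a realizer $\rho_0$ for $\exists n. \We(n) \wedge \phi(\epsilon,n)$; (ii) an $\FPH$ function $H$ that, given realizers of the side formulas together with realizers of $\We(a)$, $\We(b)$ and $\phi(a,b)$, returns a realizer of $\exists m \succeq b. \phi(\sll{a},m)$; and (iii) the realizer $\sigma \in \WW$ of $\We(s)$. The realizer for the conclusion is then constructed by iterating $H$ along $\sll{\cdot}$, by numeric primitive recursion on $\sigma$: set $R(\epsilon) = \rho_0$ and $R(y') = H(y, \pselect{R(y)})$, packaging the side-formula realizers through. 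The $\succeq$-clause in the realizer produced by $H$ at each step encodes precisely the $\preceq$-monotonicity required by the $\MBPR$ scheme, and $t^+$ (from the remark following the definition of $\MBPR$) supplies the non-decreasing bound.

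The main obstacle is to exhibit $R$ as an $\MBPR$-definable function, which is where the shape restriction on $\phi(y,n) = n \le t\,y \wedge \psi(t,y,n)$ enters decisively. Since $\psi$ is positive, $\We$-free and, crucially, \emph{disjunction-free}, the realizer of $\phi(y,n)$ has a fixed syntactic shape: a pair of the $\We$-witness $n$ (bounded by $t\,y$) and a nested tuple of equality-realizers and existential witnesses, all polynomially bounded in terms of $n$ via Remark \ref{lepreceq}. Disjunction-freeness is essential here: a disjunct $\phi \vee \psi$ would introduce a tag $\langle i,\rho\rangle$ whose size is not controlled by the existential witnesses, so the iteration could lose its polynomial bound. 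With this shape pinned down one extracts a single polynomial bounding term $t_R$ from $t^+$ and $t_{\chi_\preceq}$, and presents $R$ as $\MBPR(\rho_0, H', t_R)$ for a suitably packaged step function $H'$; the theorem from section~\ref{zwei} then places $R$ in $[\II;\CC,\BRN,\MBPR] = \FPH$, which completes the induction.
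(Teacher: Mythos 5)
Your overall architecture (induction on the quasi cut-free derivation, all cases except $\sbwmpi$ inherited from Strahm's treatment of $\textsf{PT}$, and a numeric primitive recursion iterating the step realizer along $\sll{\cdot}$) matches the paper. But there is a genuine gap at the crux: you never verify that the step function of your recursion $R$ is $\preceq$-\emph{monotone}. This is not optional bookkeeping: $\MBPR(g,h,t)=\PR(g,(h|_t)^m)$ applies the \emph{monotone section} of the step function, so if the step is not in fact monotone the recursion silently returns the previous value and $R$ no longer computes the realizer you intend. Establishing this monotonicity is where almost all of the work in the paper's proof goes, and it needs three ingredients your sketch does not supply: (a) an auxiliary function $F'$ returning a \emph{pair} whose first component exists only to force monotonicity in the case where the derivation realizes a side formula of $\Delta$ rather than the main formula (via the function $T$ coming from the first premise $\seq{\Gamma,\We(u)}{\We(t\,u),\Delta}$ --- a premise, and a whole case analysis on whether $\pnumber{\cdot}=1$, that your sketch omits entirely); (b) the pairing facts $u\preceq\langle u,v\rangle$ and $(u)_0\preceq u$ of Remark~\ref{lepreceq}; and (c) the observation that realizers of positive, $\We$-free, disjunction-free formulas do not depend on the terms occurring in them, so the $\psi$-component of the realizer is literally \emph{equal} from one stage to the next and only the existential witness changes --- that witness growing $\preceq$-monotonically by the $\succeq$-clause. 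The $\succeq$-clause alone, which is all you invoke, controls only the witness $m$, not the full recursion value $\langle m,\langle\dots\rangle\rangle$.

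Relatedly, you misattribute the role of disjunction-freeness. You argue it is needed to keep the realizer polynomially bounded, but a disjunction tag is a single bit in $\langle i,\rho_0\rangle$ and costs essentially nothing in size; the bound comes from the conjunct $n\le t\,y$ in $\phi$ together with $t^+$, much as you say. The actual reason disjunctions are banned (see Remark~\ref{dis}) is again monotonicity: with a disjunction, different disjuncts could be realized at successive stages depending on the terms, so the realizers at stage $\sigma$ and stage $\sll{\sigma}$ need not be $\preceq$-comparable, and the step function fails to be monotone. The restriction you correctly identify as decisive is therefore doing a different job than the one you assign it.
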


The proof runs by induction on the length of a quasi cut-free
derivation. We have only to check the case of our monotone induction
rule, as all other cases are like in \cite{Str03}.

By induction hypothesis, we get for the three premises:
\begin{align}
\Gamma, \We(u) & \Rightarrow \We(t\,u),\Delta \\
\Gamma & \Rightarrow \exists n. \We(n) \wedge \phi(\epsilon,n),\Delta
\\
\Gamma, \We(a),\We(b),\phi(a,b) &\Rightarrow \exists m \succeq b.\phi(\sll{a},m),\Delta
\end{align}
%
%
that there are functions $T$, $G$ and $H$ in $\FPH$ such
that\rrandnotiz{$\vec s$?!}
 for all $\vec\rho, \sigma,\tau,\upsilon$:\rrandnotiz{Footnote 3 in Strahm p.32}
\begin{align}
\rr{\vec{\rho}}{\Gamma[\vec{s}]} &\quad \Rightarrow\quad
  \rr{T(\sigma,\vec\rho)}{\We(t[\vec s](\sigma)), \Delta[\vec s]} \nonumber\\
\rr{\vec{\rho}}{\Gamma[\vec{s}]} &\quad \Rightarrow\quad
  \rr{G(\vec\rho)}{\exists n. \We(n) \wedge
    \phi(\epsilon,n)[\vec{s}],\Delta[\vec{s}]} \label{Gr}\\
\rr{\vec{\rho}}{\Gamma[\vec{s}]},
\rr{\upsilon}{\phi(\sigma,\tau)[\vec{s}]}
 &\quad \Rightarrow\quad
  \rr{\tilde{H}(\sigma,\vec\rho,\tau,\upsilon)}{\exists m \succeq
    \tau.\phi(\sll{\sigma},m)[\vec{s}],\Delta[\vec{s}]} \label{Hr}
\end{align}

Now, we need a function $F$ in $\FPH$, such that 
\begin{align}
\label{F}
\rr{\vec\rho}{\Gamma[\vec s]} \quad\Rightarrow\quad
  \rr{F(\sigma,\vec\rho)}{\exists n. \We(n) \wedge \phi(\sigma,n)[\vec
  s],\Delta[\vec s]}
\end{align}

We set
\begin{multline*}
H(\sigma,\vec\rho,\omega) =
\langle 1, 
\langle \pleft{\pselect{\tilde{H}(\sigma,\vec\rho,\pleft{\omega},\pright{\omega})}},\\
\pright{\pright{\pselect{\tilde{H}(\sigma,\vec\rho,\pleft{\omega},\pright{\omega})}}}\rangle\rangle.
\end{multline*}

This definition looks quite involved, its idea is, however,
straightforward: when, according to (\ref{Hr}), $\tilde{H}$ will realize
a formula of the form $\exists m \succeq
\tau.\phi(\sll{\sigma},m)[\vec{s}]$, $H$ is supposed to realize
$\exists m. \We(m) \wedge \phi(\sll{\sigma},m)[\vec{s}]$. Thus we have
to ``cut out'' the second conjunct $m \succeq \tau$ under the
existential quantifier ($\We(m)$ is the first conjunct which is not
visible in the abbreviation $\exists m \succeq \tau$).

Before defining the function $F$ which should realize the conclusion
of our rule, we define an auxiliary function $F'$ which returns a
pair, having the intended value of $F$ as its second component. The
first component serves only to guarantee the monotonicity. 

So, $F'(\sigma,\vec\rho,\tau)$ is defined by monotone recursion as:
\begin{align*}
F'(\epsilon,\vec\rho) & = \langle\epsilon,G(\vec\rho)\rangle,\\
F'(\sll{\sigma},\vec\rho) & = 
\begin{cases}
F'(\sigma,\vec\rho) & \text{if $\pnumber{\pright{F'(\sigma,\vec\rho)}} \not=
  1$} \\ & \qquad\text{($F$ will realize one of the $\Delta$s),}\\
\langle 
F'(\sigma,\vec\rho)
,T(\sigma,\vec\rho)\rangle
& \text{if $\pnumber{\pright{F'(\sigma,\vec\rho)}} =
  1$ and $\pnumber{T(\sigma,\vec\rho)} \not= 1$}
\\ & \qquad \text{($T$ realizes one of the $\Delta$s),}\\
\langle \epsilon,H(\sigma,\vec\rho,\pselect{\pright{F'(\sigma,\vec\rho)}})\rangle & \text{otherwise.}
\end{cases}
\end{align*}

With this function, $F(\sigma,\vec\rho)$ is defined as $\pright{F'(\sigma,\vec\rho)}$.

To check (\ref{F}) we can use a straightforward (meta-)induction on
$\sigma$:

$\sigma = \epsilon$: Given $\rr{\vec\rho}{\Gamma[\vec s]}$, in this
case, $F(\epsilon, \vec\rho) = \rr{G(\vec\rho)}{\exists n. \We(n)
\wedge \phi(\epsilon,n)[\vec s], \Delta[\vec s]}$ by (\ref{Gr}).

$\sll{\sigma}$: In the first and second case, we know that one of the side
formulas $\Delta[\vec s]$ is realized, and, of course, $F(\sll{\sigma},
\vec\rho)$ realizes one of these side formulas, too.
In the third case, we have to show that 
$$\rr{H(\sigma,\vec\rho,\pselect{F(\sigma,\vec\rho)})}{\exists
  n. \We(n) \wedge \phi(\sll{\sigma},n)[\vec s], \Delta[\vec s]}.$$  

We know that $\pnumber{F(\sigma,\vec \rho)} = 1$, thus, using the
induction 
hypothesis, we know that the first formula of the sequence is
realized, i.e., 
$$\rr{\pselect{F(\sigma,\vec\rho)}}{\exists n. \We(n)
\wedge \phi(\sigma,n)[\vec s]}.$$ 
That means, $\pleft{\pselect{F(\sigma),\vec \rho}} = \tau$ for a $\tau$ with $\rr{\pright{\pselect{F(\sigma),\vec
\rho}}}{\phi(\sigma,\tau)[\vec s]}$. 
By definition of
$H(\sigma,\vec\rho,\pselect{F(\sigma,\vec\rho)})$ is
$\tilde{H}(\sigma,\vec\rho,\pleft{\pselect{F(\sigma,\vec\rho)}},\pright{\pselect{F(\sigma,\vec\rho)}})$.
Letting $\tau$ be as above the term $\pleft{\pselect{F(\sigma),\vec
\rho}}$, and $\upsilon := \pright{\pselect{F(\sigma,\vec\rho)}}$, we get from
(\ref{Hr}) that
\begin{multline*}H(\sigma,\vec\rho,\pselect{F(\sigma,\vec\rho)})m = \\
\rr{\tilde{H}(\sigma,\vec\rho,\pleft{\pselect{F(\sigma,\vec\rho)}},\pright{\pselect{F(\sigma,\vec\rho)}})}{\exists
  m \succeq
  \pleft{\pselect{F(\sigma,\vec\rho)}}.\phi(\sll{\sigma},m)[\vec{s}],\Delta[\vec{s}]}.
\end{multline*}
The remaining coding serves to get rid of the redundant monotonicity condition.

It remains to show that $F$ is in $\FPH$. For it, we only need to check
that the step function $F'$ is of the form $h|_t$, with $h$ and $t$ in
$[\II;\CC, \BRN, \MBPR]$, and monotone. 

That the step function is bounded follows essentially as in the proof
of \cite[Theorem~15]{Str03} with the fact that the formula $\phi(y,n)$
has the shape $n \le t\,y \wedge \psi(t,y,n)$.

Monotonicity: as in the first and second case, the function stays
constant, we only have to check that the value is greater or equal (in
the sense of our monotonicity relation $\preceq$) as the recursive
argument $F'(\sigma,\vec \rho)$. This is trivial in the first case
(where it is equal), and follows in the second case from the fact that
$F'(\sigma,\vec \rho)$ is coded in the first argument of the pair.
In the third case, we have to show that, for all $\sigma$,
$F'(\sigma,\vec\rho) \preceq
\langle \epsilon,
H(\sigma,\vec\rho,\pselect{\pright{F'(\sigma,\vec\rho)}})\rangle.$
From the case 
distinction, we know, that $\pright{F'(\sigma,\vec\rho)} = \langle 1,
\pselect{\pright{F'(\sigma,\vec\rho)}}\rangle$, and
$\rr{\pselect{\pright{F'(\sigma,\vec\rho)}}}{\exists n. \We(n) \wedge
  \phi(\sigma,n)[\vec s]}$, i.e., $\pselect{\pright{F'(\sigma,\vec\rho)}}$ is of
the form $\langle \omega_0, \omega_1 \rangle$ with
$\rr{\omega_1}{\phi(\sigma,\omega_0)[\vec s]}$. On the other hand,
\begin{align*}
& \hphantom{{} = {}\ }H(\sigma,\vec\rho,\pselect{\pright{F'(\sigma,\vec\rho)}})\\
& = 
\langle 1, 
\langle
\pleft{\pselect{\tilde{H}(\sigma,\vec\rho,\pleft{\pselect{\pright{F'(\sigma,\vec\rho)}}},\pright{\pright{\pselect{\pright{F'(\sigma,\vec\rho)}}}})}},\\
&\qquad\qquad
\pright{\pright{\pselect{\tilde{H}(\sigma,\vec\rho,\pleft{\pselect{\pright{F'(\sigma,\vec\rho)}}},\pright{\pselect{\pright{F'(\sigma,\vec\rho)}}})}}}\rangle\rangle
\\
& =
\langle 1, 
\langle \pleft{\pselect{\tilde{H}(\sigma,\vec\rho,\omega_0,\omega_1)}},
\pright{\pright{\pselect{\tilde{H}(\sigma,\vec\rho,\omega_0,\omega_1)}}}\rangle\rangle.
\end{align*}
According to (\ref{Hr}) and the condition of the case distinction we have 
$$\rr{\pselect{\tilde{H}(\sigma,\vec\rho,\omega_0,\omega_1)}}{\exists m
  \succeq \omega_0.\phi(\sll{\sigma},m)[\vec s]}$$ or, more detailed, 
$$\rr{\pselect{\tilde{H}(\sigma,\vec\rho,\omega_0,\omega_1)}}{\exists
  m. \We(m) \wedge m \succeq \omega_0 \wedge \phi(\sll{\sigma},m)[\vec
  s]}.$$ From the second conjunct we can conclude, 
$\omega_0 \preceq
\pleft{\pselect{\tilde{H}(\sigma,\vec\rho,\omega_0,\omega_1)}}$. It
remains to show that $\omega_1 \preceq
\pright{\pright{\pselect{\tilde{H}(\sigma,\vec\rho,\omega_0,\omega_1)}}}$.
We have $\rr{\omega_1}{\phi(\sigma,\omega_0)[\vec s]}$ and
$\rr{\pright{\pright{\pselect{\tilde{H}(\sigma,\vec\rho,\omega_0,\omega_1)}}}}{\phi(\sll{\sigma},\pleft{\pselect{\tilde{H}(\sigma,\vec\rho,\omega_0,\omega_1)}})[\vec
s]}$. Now, it is important that $\phi$ is a positive, $\We$-free
formula \emph{without disjunction}.\randnotiz{More detailed IV.8} For these class of formulas, the
realizers do not depend on the terms occurring in them (as long as they
are realizable, of course). Thus, $\omega_1$ and
$\pright{\pright{\pselect{\tilde{H}(\sigma,\vec\rho,\omega_0,\omega_1)}}}$
are equal. Now, the monotonicity follows from the properties we have
for the monotonicity relation together with the
pairing (see Remark~\ref{lepreceq}).

\begin{remark}\label{dis} The proof of the monotonicity property of
  the step function depends on our restriction to disjunction-free
  formulas in the monotone induction scheme. In fact, if we allow
  disjunctions, the monotonicity is not any longer guaranteed, as,
  depending on the terms, different disjuncts could be realized and
  the value of the realizers may differ. In fact, disjunction has a
  ``non-monotonic'' flavor. However, it is not clear whether one can
  make any use of disjunction to enlarge the class of provably
  total functions.  So, we pose as a question:
\begin{question}
What is the class of provably total functions of $\aph$ if the
monotone induction scheme allows disjunctions in the formula $\phi(y,n)$?
\end{question}\rrandnotiz{primitive positive formulae}
\end{remark}

The final result follows now as a corollary:

\begin{corollary}[cf.\ {\cite[Corollary 16]{Str03}}]
Let $t$ be a closed term and assume that 
$$\aphp \vdash \seq{\We(u_1) \wedge \dots \wedge
  \We(u_n)}{\We(t\,u_1\,\dots\,u_n)},$$
for distinct variables $u_1,\dots,u_n$. Then there exists a function
$f: \WW^n \to \WW$ in $\FPH$ such that we have for all words
$w_1,\dots,w_n$ in $\WW$,
$$\mathcal{M}(\lambda\eta) \models
t\,\overline{w_1}\dots\overline{w_n} = \overline{F(w_1,\dots,w_n)}.$$
\end{corollary}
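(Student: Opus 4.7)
The plan is to string together the two preceding results---partial cut elimination and the realizability theorem for $\aphp$---in the standard way, so that the desired $\FPH$-function is extracted directly from a quasi-cut-free derivation of the given sequent.

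First I would take the assumed derivation $\aphp \vdash \We(u_1) \wedge \dots \wedge \We(u_n) \Rightarrow \We(t\,u_1\dots u_n)$ and apply the partial cut elimination theorem to obtain $\aphp \vdash_{\!\!\!*} \We(u_1) \wedge \dots \wedge \We(u_n) \Rightarrow \We(t\,u_1\dots u_n)$. Since both antecedent and succedent consist of positive formulas, the subsequent realizability extraction applies. I would then invoke the realizability theorem for $\aphp$ with $\Gamma = \{\We(u_1) \wedge \dots \wedge \We(u_n)\}$ and $\Delta = \{\We(t\,u_1\dots u_n)\}$, which produces a function $F \colon \WW^n \to \WW$ in $\FPH$ such that, whenever $\rho$ realizes $\We(\overline{w_1}) \wedge \dots \wedge \We(\overline{w_n})$ under the substitution $u_i \mapsto \overline{w_i}$, the value $F$ applied to the components of $\rho$ realizes the succedent $\We(t\,\overline{w_1}\dots\overline{w_n})$.

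Next I would plug in the concrete data. For each $w_i \in \WW$, the formula $\We(\overline{w_i})$ is trivially realized by $w_i$ itself (since $\mathcal{M}(\lambda\eta) \models \overline{w_i} = \overline{w_i}$), so using the low-level pairing $\langle \cdot,\cdot\rangle$, the tuple $\langle w_1,\dots,w_n\rangle$ realizes the conjunction in the antecedent. Reading off the conclusion of the realizability theorem, with a mild renaming to match the signature of the corollary, yields a value $F(w_1,\dots,w_n) \in \WW$ realizing $\We(t\,\overline{w_1}\dots\overline{w_n})$. By the very first clause of the definition of the realizability relation, this means $\mathcal{M}(\lambda\eta) \models t\,\overline{w_1}\dots\overline{w_n} = \overline{F(w_1,\dots,w_n)}$, which is the required equation.

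I expect no genuine obstacle, since the whole argument is a direct composition of results already in place; the only slight bookkeeping issue is the adjustment between ``$\vec\rho$ realizes the sequence $\Gamma$'' (with the outer index coding) and ``$\rho$ realizes the single conjunction $\We(u_1)\wedge\dots\wedge\We(u_n)$'' (with nested pairs), which is handled by the projection functions $\pleft{\cdot}, \pright{\cdot}$ and is demonstrably in $\FPtime$, hence in $\FPH$. The essential content---that the step function for the monotone induction rule, built from $G$, $T$ and $\tilde H$, lies in $[\II;\CC,\BRN,\MBPR]$ and is monotone, and that the overall bound is controlled by $t\,y$ via Remark~\ref{lepreceq}---has already been verified in the proof of the realizability theorem, so no further work is needed here.
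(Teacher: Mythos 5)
Your proposal is correct and matches the paper's intent exactly: the paper gives no explicit argument beyond ``the final result follows now as a corollary,'' and the intended derivation is precisely the composition you describe --- partial cut elimination, then the realizability theorem applied to the (positive) sequent, then instantiating the antecedent with the trivial realizers $w_i$ of $\We(\overline{w_i})$ and reading off the first clause of the realizability definition. The bookkeeping you flag (nested pairs for the conjunction versus the indexed coding of the succedent sequence) is the only content of the step, and you handle it correctly.
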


\bibliographystyle{alpha}
\bibliography{compl}

\end{document}